\documentclass{llncs}

\usepackage{graphicx}
\usepackage{latexsym}
\usepackage{amsmath}
\usepackage{amssymb}


\newcommand{\cancel}[1]{}

\def\G{\ensuremath{\mathcal{G}}}

\def\X{\ensuremath{\mathcal{X}}}

\def\Nat{\ensuremath{\mathbb{N}}}



\begin{document}

\title{Towards Network Games\\with Social Preferences}

\author{Petr Kuznetsov \and Stefan Schmid}

\institute{
  TU Berlin / Deutsche Telekom Laboratories, D-10587 Berlin,
  Germany
}

\date{}
\maketitle

\begin{abstract}
Many distributed systems can be modeled as \emph{network games}: a
collection of \emph{selfish} players that communicate in order to
maximize their individual utilities. The performance of such games
can be evaluated through the costs of the system \emph{equilibria}:
the system states in which no player can increase her utility by
unilaterally changing her behavior.
However, assuming that all players are selfish and in particular
that all players have the same utility function may not always be
appropriate. Hence, several extensions to incorporate also
altruistic and malicious behavior in addition to selfishness have
been proposed over the last years. In this paper, we seek to go one
step further and study arbitrary relationships between participants.
In particular, we introduce the notion of the \emph{social range
matrix} and explore the effects of the social range matrix on the
equilibria in a network game. In order to derive concrete results,
we propose a simplistic network creation game that captures the
effect of social relationships among players.
\end{abstract}


\section{Introduction}

Many distributed systems have an open clientele and can only be
understood when taking into account socio-economic aspects. A
classic approach to gain insights into these systems is to assume
that all players are selfish and seek to maximize their utility.
Often, the simplifying assumption is made that all players have the
same utility function. However, distributed systems are often
``socially heterogeneous'' whose participants run different clients
and protocols, some of which may be selfish while others may even
try to harm the system. Moreover, in a social network setting where
members are not anonymous, some players may be friends and dislike
certain other players. Thus, the state and evolution of the system
depends on a plethora of different utility functions. Clearly, the
more complex and heterogeneous the behavior of the different network
participants, the more difficult it becomes to understand (or even
predict) certain outcomes.

In this paper, we propose a more general approach to model the
players' utilities 
and introduce a social range matrix. This matrix specifies the
\emph{perceived} costs that are taken into account by the players
when choosing a strategy. For example, a player who maliciously
seeks to hamper the system performance has a perceived cost that
consists of the negative costs of the other players. On the other
hand, an altruistic player takes into account the costs of all other
players and strives for a socially optimal outcome. There are many
more player types in-between that care about some players but
dislike others.

In order to gain insights into the implications of different social
ranges, we consider a novel network creation game that captures the
willingness of a group of people to connect to each other. In this
game, players do not incur infinite costs if they are not connected
to some players. Rather, the utility of a player is given by the
number of other players in her $R$-neighborhood, for some parameter
$R$. For instance, in a game with $R=1$, players can only
collaborate with and benefit from their direct neighbors. Or imagine
a peer-to-peer network like Gnutella where files are searched by
flooding up to a certain radius (e.g., a time-to-live of $R=TTL=5$),
then a player is mainly interested in the data shared in her $5$-hop
neighborhood. Our motivation in using this model stems from its
simplicity which allows to exemplify and quantify the effects of
different social matrices.

\subsection{Related Work}

Over the last years, several models for distributed systems have
been proposed that go beyond purely selfish settings. For instance,
security and robustness related issues of distributed systems have
been an active field of research, and malicious faults are studied
intensively (e.g.,~\cite{byzcastro,flightpath}). To the best of our
knowledge, the first paper to study equilibria with a malicious
player is by Karakostas and Viglas~\cite{timtipp} who consider a
routing application where a single malicious player uses his flow
through the network in an effort to cause the maximum possible
damage. In order to evaluate the impact of such malicious behavior,
a coordination ratio is introduced which compares the social costs
of the worst \emph{Wardrop equilibrium} to the social costs of the
best \emph{minimax saddle-point}. In~\cite{BARalt}, implementation
problems are investigated with $k$ faulty players in the population,
but neither their number nor their identity is known. A planner's
objective then is to design an equilibrium where the non-faulty
players act according to her rules. Or in~\cite{BAR}, the authors
describe an asynchronous state machine replication protocol which
tolerates \underline{B}yzantine, \underline{A}ltruistic, and
\underline{R}ational behavior. Moscibroda et al.~\cite{bg}
discovered the existence of a so-called \emph{fear factor} in the
virus inoculation game where the presence of malicious players can
improve the social welfare under certain circumstances. This
windfall of malice has subsequently also been studied in the
interesting work by Babaioff et al.~\cite{bobby} on congestion
games.

There exists other work on game theoretic systems in which not every
participating agent acts in a rational or malicious way. In the
\emph{Stackelberg theory}~\cite{Stackl}, for instance, the model
consists of selfish players and players that are \emph{controlled by
a global leader}. The leader's goal is to devise a strategy that
induces an optimal or near optimal so-called Stackelberg
equilibrium. Researchers have recently also been interested in the
effects of altruism that co-exists with
selfishness~\cite{altruism,windfall}. For example, Meier et
al.~\cite{windfall} have shown (for a specific game played on a
social network) that friendship among players is always beneficial
compared to purely selfish environments, but that the gain from
social ties does not grow monotonically in the degree of friendship.

In contrast to the literature discussed above, we go one step
further and initiate the study of games where players can be
embedded in arbitrary social contexts and be selfish towards certain
players, be friends with some other players, and even have enemies.

In particular, we apply our framework to a novel network creation
game (for similar games, see the connection games described in
Chapter~19.2 of~\cite{mechdesignbook}). Network creation has been a
``hot topic'' for several years. The seminal work by Fabrikant et
al.~\cite{Fabrikant03} in 2003 seeks to shed light on the Internet's
architecture as built by economic agents, e.g., by Internet
providers or \emph{autonomous systems}. Recent subsequent work on
network creation in various settings includes
\cite{uraweilts,Anshelevich04,rough1,Corbo05,demainelg}. Moscibroda
et al.~\cite{lg} considered network creation games for peer-to-peer
systems. The game proposed in our paper here can be motivated by
peer-to-peer systems as well. However, in contrast to~\cite{lg}
where peers incur an infinite cost if they are not all connected to
each other, we believe that our model is more appropriate for
unstructured peer-to-peer systems.

The notion of interpersonal influence matrix, similar to our social
range matrices, is used in sociology for understanding the dynamics
of interpersonal agreement in a group of individuals (see, e.g.,
\cite{social-influence-friedkin}).

\subsection{Our Contributions}

The main contribution of this paper is the introduction and initial
study of the social range matrix which allows us to describe
arbitrary social relationships between players. For instance, social
range matrices can capture classic \emph{anarchy} scenarios where
each player is selfish, \emph{monarchy} scenarios where players only
care about one network entity, or \emph{coalitions} that seek to
support players within the same coalition but act selfishly or
maliciously towards other coalitions. Despite this generality, we
are able to derive interesting properties of such social matrices.
For instance, we show that there are matrix transformations that do
not affect the equilibria points (and the convergence behavior) of a
game.

In addition, as a case study, we analyze a simplistic social network
creation game where players can decide to which other players they
want to connect. While a new connection comes at a certain cost, a
player can also benefit from her neighborhood. That is, we assume
that the players' utility is given by the number of other players
they are connected to up to a certain horizon, minus the cost of the
links they have to pay for. For example, this game can be motivated
by unstructured peer-to-peer systems where data is usually searched
locally (in the peers' neighborhood) and overall connectivity is not
necessarily needed. We focus on this game due to its simplicity that
allows us to study the main properties of the social matrix and
exemplify the concepts. For example, in a social context where
players can choose their neighbors, it is likely that players will
connect to those players who they are friends with. We will show
that this intuition is correct and that social relationships are
indeed often reflected in the resulting network. As another example,
we show that the social welfare of monarchic societies can be higher
than that of anarchic societies if the price of establishing a
connection is relatively low; otherwise, the welfare is lower.

Our new model and the network creation game open a large number of
research directions. We understand our work as a first step in
exploring the effect of social ranges on the performance of network
games and use this paper to report on our first insights.

\subsection{Paper Organization}

The rest of the paper is organized as follows. We describe our model
and formally introduce the social range matrix in
Section~\ref{sec:model}. Section~\ref{sec:general} presents our
first insights on the properties of a social range matrix. We then
report on our case study on social network formation
(Section~\ref{sec:netcreation}). The paper is concluded with a brief
discussion and an outlook on future research directions in
Section~\ref{sec:conclusion}.

\section{Social Range Matrices and Perceived Equilibria}\label{sec:model}

In this section, we introduce the concept of a game theory where
players are embedded in a social context; in particular, we define
the social range matrix $F$ describing for each player $i$ how much
she cares about every other player $j$.


We consider a set  $\Pi$ of $n$ players (or nodes), 
$\Pi=\{0,\ldots,n-1\}$. Let $\X_i$ be the set of possible strategies
player $i$ can pursue in a given game $\G$.
A \emph{strategy profile} $s\in \X_0\times\ldots\times\X_{n-1}$
specifies a configuration, i.e., $s$ is the vector of the strategies
of all players.

%
The cost that actually arises at a player $i$ in a given strategy
profile $s$ is described by its \emph{actual} cost function
$c_a(i,s)$. However, depending on the social context a player is
situated in, it may experience a different \emph{perceived cost}
$c_p(i,s)$: While a purely selfish player may be happy
with a certain situation, another player that cares about the actual
costs of her friends may have a higher perceived cost and may want
to change her strategy to a socially better one. (Note, however,
that the distinction between ``purely selfish players'' and players
that take into account the utility or cost of other players is
artificial: Players whose action depends on other players' utilities
can be considered ``purely selfish'' as well, and simply have a
different cost function.)

Formally,
we model the perceived costs of a given player as a
linear combination of the actual costs of all other players in the
game. The \emph{social range} of player $i$ is a vector
$f_i=(f_{i0},...,f_{i(n-1)})\in\mathbb{R}^n$. Intuitively, $f_{ij}$
quantifies how much player $i$ cares about player $j$, in both a
positive (if $f_{ij}>0$) and a negative way ($f_{ij}<0$). $f_{ij}=0$
means that $i$ does not care about $j$. The social ranges of all the
players constitute the \emph{social range matrix} $F=\{f_{ij}\}$ of
the game. We will later see (Lemma~\ref{prop:factor}) that it is
sufficient to focus on normalized matrices where $\forall i,j:
-1\leq f_{ij} \leq 1$ (rather than $f_{ij}\in \mathbb{R}$).

The perceived cost of player $i$ in a strategy profile $s$ is thus
calculated as:
\[
c_p(i,s)=\sum_j f_{ij} c_a(j,s).
\]
In other words, the perceived cost of player $i$ increases with the
aggregate costs of $i$'s \emph{friends} (players $j$ with
$f_{ij}>0$) and decreases with the aggregate costs of $i$'s
\emph{enemies} (players $j$ with $f_{ij}<0$). Note that we allow a
player $i$ to value other players' costs more than her own cost,
i.e., $f_{ii}$ can be smaller than some $|f_{ij}|$, $i\neq j$. This
captures the effect of sacrificing one's own interests for the sake
(or for the harm) of others.

Henceforth, a social matrix $F$ with all $1$'s (resp., all $-1$'s,
except for $f_{ii}$) is called \emph{altruistic} (resp.,
\emph{malicious}). Generally, a social matrix with a lot of zero or
negligibly small elements describes a system with weak social ties.
Some interesting social range matrices $F$ are:
\begin{enumerate}
\item If $F$ is the identity matrix, we are in the
realm of classic game theory where each player is selfish.

\item A completely altruistic scenario is described by a social matrix $F$ consisting of $1$s only, i.e.,
$f_{ij}=1$ ($\forall i,j$). Alternatively, we can also define an
altruistic player that considers her own costs only to a small
extent ($f_{ii}=\epsilon$, for some arbitrarily small $\epsilon>0$).

\item In a situation where $\exists k$ such that $\forall i,j$: $f_{ij}=0$ except for
$f_{ik}=1$, the players only care about a single individual. We will
refer to this situation as a \emph{monarchy scenario}. (Sometimes it
makes sense to assume that players are at least a bit
self-interested and $\forall i$: $f_{ii}=\epsilon$ for an
arbitrarily small positive $\epsilon$.)

\item If $\exists k$ such that $\forall i,j$: $f_{ij}=0$ except for
$f_{kj}=1$ (and maybe $f_{kk}=\epsilon$), there is one benevolent
player that seeks to maximize the utility of all players.

\item If $\exists k$ such that for all players $i$: $f_{ii}=1$ and otherwise $0$, and
$f_{ki}=-1$, we have a selfish scenario with one malicious player
$k$ that seeks to minimize the utility of all the players.
(Alternatively, we can also postulate that for a malicious player
$k$, $f_{kk}=1$.)

\item If $\exists j,k$ such that $\forall i$: $f_{ji}=f_{ki}$, then we
  say that players $j$ and $k$ \emph{collude}: their incentives
  to deviate from a given strategy profile are identical.
(We will show in Lemma~\ref{prop:factor} that $j$ and $k$ collude
 even if $\exists \lambda>0$:   $\forall i$, $f_{ji}=\lambda f_{ki}$.)

\end{enumerate}
There are special player types to consider, e.g.:
\begin{definition}[Ignorant and Ignored Players]\label{defn:players}
A player $i$ is called \emph{ignorant} if $f_{ij}=0~~\forall j$; the
perceived cost of an ignorant player $i$ does not depend on the
actual costs. Now suppose that $F$ contains a zero column:
$f_{ji}=0, \forall j$. In this case, no player cares about $i$'s
actual cost, and we call $i$ \emph{ignored}.
\end{definition}

In game theory, (pure) \emph{Nash equilibria} are an important
solution concept to evaluate the outcomes of games. A Nash
equilibrium is defined as a situation where no player can
unilaterally reduce her cost by choosing another strategy given the
other players' strategies. In our setting, where the happiness of a
given player depends on her perceived costs, the equilibrium concept
also needs to be expressed in terms of perceived costs. We formally
define the \emph{perceived Nash equilibrium} (PNE) as follows.
\begin{definition}[Perceived Nash Equilibrium]
A strategy profile $s$ is a \emph{perceived Nash equilibrium} if for
every $s'$ that differs from $s$ in exactly one position $i$, we
have $c_p(i,s')\geq c_p(i,s)$.
\end{definition}

In order to evaluate the system performance, we study the social
cost of an equilibrium. Note that the social cost is defined with
respect to \emph{actual} costs: the \emph{social cost} of a strategy
profile $s$ is defined
    as $\textsf{Cost}(s)=\sum_j c_a(j,s)$.
A strategy profile $s$ is a \emph{social optimum} if $\forall s'$:
$\textsf{Cost}(s')\geq \textsf{Cost}(s)$.

For a given game $\G$ and a social matrix $F$, consider the ratio
between the actual cost of the worst perceived Nash equilibrium and
the cost of the social optimum. Comparing this ratio with the price
of anarchy (the ratio computed with respect to actual Nash
equilibria), we obtain the ``effect of socialization'' that captures
the benefits or disadvantages that social relations contribute to
the outcome of the game. Below we fix a game $\G$, and give some
basic properties following immediately from the definitions.

\section{Basic Properties of Social Range Matrices}\label{sec:general}

We start our analysis by examining properties of the social range
matrix.
First, observe that $F$ is invariant to row scaling.
\begin{lemma}\label{prop:factor} Let $F$ be a social matrix, and let $\lambda>0$ be an arbitrary factor. Let
$F'$ be a social matrix obtained from $F$ by multiplying a row of
$F$ by $\lambda$. Then $s$ is a perceived Nash equilibrium
w.r.t.~$F$ if and only if $s$ is a perceived Nash equilibrium with
$F'$.
\end{lemma}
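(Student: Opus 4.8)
The plan is to exploit the fact that multiplying row $i$ of $F$ by $\lambda > 0$ rescales exactly the perceived cost function of player $i$, while leaving every other player's perceived cost untouched. First I would fix the index $i$ of the scaled row, and let $F' = \{f'_{jk}\}$ where $f'_{ik} = \lambda f_{ik}$ for all $k$ and $f'_{jk} = f_{jk}$ for all $j \neq i$. Writing $c_p$ and $c'_p$ for the perceived cost functions induced by $F$ and $F'$ respectively, the key observation is the pair of identities
\[
c'_p(i,s) = \sum_k f'_{ik} c_a(k,s) = \lambda \sum_k f_{ik} c_a(k,s) = \lambda\, c_p(i,s),
\]
and $c'_p(j,s) = c_p(j,s)$ for every $j \neq i$ and every strategy profile $s$.

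Next I would unwind the definition of a perceived Nash equilibrium. A profile $s$ fails to be a PNE (with respect to either matrix) exactly when some player has a profitable unilateral deviation. For a player $j \neq i$, the deviation condition $c_p(j,s') < c_p(j,s)$ is literally the same inequality as $c'_p(j,s') < c'_p(j,s)$, since the two functions coincide. For player $i$, the deviation condition $c'_p(i,s') < c'_p(i,s)$ is $\lambda\, c_p(i,s') < \lambda\, c_p(i,s)$, which, because $\lambda > 0$, is equivalent to $c_p(i,s') < c_p(i,s)$. Hence player $i$ has a profitable deviation from $s$ under $F$ if and only if she has one under $F'$, and the same holds trivially for all other players.

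Combining the two cases: $s$ admits a profitable unilateral deviation under $F$ if and only if it admits one under $F'$, so $s$ is a PNE with respect to $F$ if and only if it is a PNE with respect to $F'$. I do not anticipate a genuine obstacle here — the argument is essentially the remark that scaling a single objective function by a positive constant does not change that player's best responses — but the one point to state carefully is that the \emph{only} thing affected by the row operation is player $i$'s perceived cost (the actual costs $c_a(\cdot,\cdot)$ and hence the other rows' sums are unchanged), so that the equivalence can be checked player-by-player. Finally, I would note the immediate corollary used elsewhere in the paper: iterating the lemma lets us normalize $F$ so that $\max_k |f_{ik}| \le 1$ in every row, and it shows that $f_{ji} = \lambda f_{ki}$ with $\lambda > 0$ makes players $j$ and $k$ collude, since after rescaling row $j$ by $1/\lambda$ the two rows are identical.
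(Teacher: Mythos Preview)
Your argument is correct and is essentially the same as the paper's: fix the scaled row $i$, observe that the perceived costs of players $j\neq i$ are unchanged while player $i$'s perceived cost is multiplied by $\lambda>0$, and conclude that the unilateral-deviation inequalities are equivalent for every player. If anything, your write-up is a bit more explicit about the biconditional and about why only player $i$'s perceived cost is affected, which is exactly the point the paper's proof relies on.
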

\begin{proof}
Let $i$ be the player whose row is scaled. Since player $i$'s actual
costs are not affected by multiplying $f_{i\cdot}$ by $\lambda$, the
perceived costs of all other players $j\neq i$ remain the same and
hence, they still play their equilibrium strategy under $F'$.
However, also player $i$ will stick to her strategy in $F'$:
$$c_p(i,s) = \sum_j \lambda f_{ij} c_a(j,s) \leq c_p(i,s') = \sum_j
\lambda f_{ij} c_a(j,s')$$ since we know that in $F$, $c_p(i,s) =
\sum_j f_{ij} c_a(j,s) \leq c_p(i,s') = \sum_j f_{ij} c_a(j,s')$ for
all $s'$ that differ from $s$ in $i$'s strategy.\qed\end{proof}

In particular, Lemma~\ref{prop:factor} implies that we can normalize
a social matrix $F$ by $f_{ij}'=f_{ij}/\max_{\ell,k}|f_{\ell
k}|$.\footnote{Here we assume $\max_{\ell,k}|f_{\ell k}|>0$;
    otherwise, every strategy is a perceived Nash equilibrium and the price of socialization is
    the worst possible.}
Therefore, in the following, we assume normalized matrices $F$ for
which $f_{ij}\in[-1,1]$, $\forall i,j\in\{0,\ldots,n-1\}$.
%
\begin{lemma}
\label{lemma:altruistic} If $f_{ij}=1$ $~\forall i,j$, then every
social optimum is a perceived Nash equilibrium. If $f_{ij}=-1$
$~\forall i,j$, then every social minimum is a perceived Nash
equilibrium.
\end{lemma}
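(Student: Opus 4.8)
The plan is to observe that under these two extreme matrices the perceived cost of \emph{every} player collapses to (plus or minus) the global social cost, after which both claims follow immediately from the definition of a social optimum.

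First I would expand the perceived cost in the altruistic case. With $f_{ij}=1$ for all $i,j$ we get, for every player $i$ and every profile $s$,
\[
c_p(i,s)=\sum_j f_{ij}\,c_a(j,s)=\sum_j c_a(j,s)=\textsf{Cost}(s),
\]
so all players share the same perceived cost, namely $\textsf{Cost}$. Now let $s$ be a social optimum. By definition $\textsf{Cost}(s')\ge \textsf{Cost}(s)$ for \emph{all} profiles $s'$, in particular for every $s'$ obtained from $s$ by a unilateral deviation of some player $i$. Hence $c_p(i,s')=\textsf{Cost}(s')\ge \textsf{Cost}(s)=c_p(i,s)$, which is exactly the condition for $s$ to be a perceived Nash equilibrium.

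The malicious case is symmetric. With $f_{ij}=-1$ for all $i,j$ we obtain $c_p(i,s)=-\sum_j c_a(j,s)=-\textsf{Cost}(s)$ for every $i$. Reading ``social minimum'' as a profile $s$ minimizing social welfare, i.e.\ maximizing the actual social cost (so that $\textsf{Cost}(s')\le \textsf{Cost}(s)$ for all $s'$), a unilateral deviation to $s'$ yields $c_p(i,s')=-\textsf{Cost}(s')\ge -\textsf{Cost}(s)=c_p(i,s)$, again the PNE condition.

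There is essentially no hard step here; the only subtlety to flag is the direction of the implication. A social optimum (resp.\ social minimum) is a \emph{global} notion, whereas the perceived Nash equilibrium condition quantifies only over \emph{unilateral} deviations, so the hypothesis is strictly stronger and the implication goes the easy way — the converse fails in general, since a PNE need not be globally optimal. I would also note that the row normalization of Lemma~\ref{prop:factor} plays no role in this argument, and that the clean statement genuinely relies on the full all-ones (resp.\ all-minus-ones) matrix: with the $f_{ii}=\epsilon$ variant the perceived cost becomes $\textsf{Cost}(s)-(1-\epsilon)c_a(i,s)$, which no longer collapses exactly to the social cost.
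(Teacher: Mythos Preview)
Your proof is correct and follows essentially the same approach as the paper: both reduce the perceived cost to $\pm\textsf{Cost}(s)$ and then invoke the global optimality (resp.\ maximality) of $s$ to rule out profitable unilateral deviations. Your write-up is in fact slightly cleaner in handling the signs in the malicious case, and your closing remarks on the one-way implication and the $\epsilon$-variant are accurate additional observations.
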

\begin{proof}
The proof is simple. By the definition of a social optimum $s$,
$\sum_i c_a(i,s)$ is minimal, i.e., $\nexists s'$ with $\sum_i
c_a(i,s') < \sum_i c_a(i,s)$. Thus, $s$ is also an equilibrium if
$f_{ij}=1~~\forall i,j$, as $\nexists s'$ for a given player $j$
with $c_p(j,s')=\sum_i c_a(i,s')< c_p(j,s)=\sum_i c_a(i,s)$.

Similarly for the minimum maximizing $\sum_i c_a(i,s)$ ($\nexists
s'$ with $\sum_i c_a(i,s') > \sum_i c_a(i,s)$). Profile $s$ is also
a perceived equilibrium for $f_{ij}=-1~~\forall i,j$, as $\nexists
s'$ for a given player $j$ with $c_p(j,s')=\sum_i c_a(i,s')>
c_p(j,s)=\sum_i c_a(i,s)$. \qed\end{proof}

Note however that the opposite direction is not true: there may be
games with equilibria which are not optimal, even if all players are
altruistic, namely if the game exhibits local optima.

Another special case that allows for general statements are ignorant
and ignored players (see~Definition~\ref{defn:players}). Note that
neither ignorant nor ignored players can benefit from their
unilateral actions: their perceived cost functions do not depend on
their strategies. Moreover, no player's perceived cost depends on
the actions of an ignored player. If $s$ is a perceived equilibrium,
then any strategy $s'$ that differs from $s$ only in position $i$,
where $i$ is an ignored player, is also a perceived equilibrium. In
other words, it is sufficient to determine the set of equilibria
$\textit{PNE}'$ with respect to the strategies of non-ignored
players $\Pi'$.

Existing literature also provides interesting results on the
properties and implications of certain types of social matrices. For
instance, from the work by Babaioff et al.~\cite{bobby}---and even
earlier, from the work by Karakostas and Viglas~\cite{timtipp}!---we
know that there are games where the presence of players who draw
utility from the disutility of others, can lead to an
\emph{increase} of the social welfare; this however only holds for
certain game classes that are characterized by some form of a
generalized Braess paradox. Or from the work by Meier et
al.~\cite{windfall}, it follows that in a virus inoculation game
where the social range matrix depends on the adjacency metrics of
the social network, a society can only benefit from friendship
(positive entries in the social range matrix), although \emph{not
always} in a monotonic manner.

Thus, in specific game classes, some ``corner case'' phenomena may
be observed for certain types of social matrices. In order to focus
on the principal properties of the social range, in the following we
concentrate on our network creation game. It turns out that in games
where choosing the neighbors can be a part of a player's strategy,
there is a strong correlation between the social ties and the
resulting network topology.




\section{Case Study: Network Creation}\label{sec:netcreation}

In this section, we give a formal definition of our network creation
game  and investigate the implications of different social ranges on
the formed topologies.

\subsection{A Network Creation Game}

As a use-case for employing our game-theoretic framework, we propose
a novel simple network creation game where a node (or
\emph{player})~$i$ can decide to which other nodes~$j$ she wants to
connect in an undirected graph. Establishing a connection $\{i,j\}$
(or \emph{edge}) entails a certain cost; we will assume that
connections are undirected, and that one end has to pay for it. On
the other hand, a player benefits from positive network
externalities if it is connected to other players (possibly in a
multi-hop fashion). We assume that the gain or cost of a player
depends on the number of players in her $R$-hop neighborhood, for
some parameter $R\geq 0$. For instance, a network creation game with
$R=1$ describes a situation where players can only benefit from (or
collaborate with) their direct neighbors. As motivation for larger
radii, imagine an unstructured peer-to-peer network where searching
is done by flooding up to radius $R$, and where the number of files
that can be found increases monotonically in the number of players
reached inside this radius.

Formally, the actual cost of player $i$ is given by:
$$
c_a(i,s)= \alpha \cdot s_i - g(\sum_{j=1}^R |\Gamma^j(i,s)|)
$$
where parameter $\alpha\geq 0$ denotes the cost per connection,
$s_i$ is the number of connections player $i$ pays for, and
$|\Gamma^j(i,s)|$ specifies to how many nodes node $i$ is connected
with shortest hop-distance $j$ in a graph incurred by strategy
profile $s$.
Moreover, $g: \mathbb{N}_n \rightarrow \mathbb{R}$ is a function
that specifies the utility of being in a connected group of a given
size (here $\Nat_n=\{0,\ldots,n-1\}$). For example, $g(x)=x$ denotes
that the utility grows linearly with the number of nodes within the
given radius; a super-linear utility such as $g(x)=x^2$ may be
meaningful in situations where the networking effects grow faster,
and a sub-linear utility $g(x)=\sqrt{x}$ means that marginal utility
of additional players declines with the size.
By convention, we assume that 
$g(0)=0$.

Finally, note that multiple strategy profiles (and hence perceived
Nash equilibria) can describe the same network topology where the
links are payed by different endpoints. Henceforth, for simplicity,
we will sometimes say that a given topology \emph{constitutes} (or
\emph{is}) a social optimum or an equilibrium if the corresponding
profiles are irrelevant for the statement, are clear from the
context, or if it holds for any strategy describing this network.

Given two network topologies of the same perceived costs but where
one topology has some additional edges that need to be paid by a
given player, this player is likely to prefer the other topology.
That is, it often makes sense to assume that a player does not
completely ignore the own actual cost, that is, $\forall i: f_{ii}=
\epsilon$ for an arbitrarily small $\epsilon>0$.


%

\subsection{Social Optimum and Anarchy}
%
First we describe the properties of the general network creation
game in which players behave in a selfish manner. Social optima are
characterized in the following lemma. It turns out that cliques and
trees are the most efficient networks in our game.
\begin{lemma}
\label{lem:social} Consider the network creation game
where 
$\forall x\in\Nat_{n-1}$, $g(x+1)-g(x)>\alpha/2$. Then in the case
$R=1$, the only social optimum is the clique, and in the case $R>1$,
every social optimum is a tree of diameter at most $\min(R,n-1)$.
\end{lemma}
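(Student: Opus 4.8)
The plan is to reason directly about the social cost. Since each edge is paid for by exactly one of its endpoints, a strategy profile $s$ inducing a graph $G$ has $\textsf{Cost}(s)=\alpha\,|E(G)|-\sum_i g(r_i)$, where $r_i:=\sum_{j=1}^{R}|\Gamma^j(i,s)|$ is the number of vertices $\neq i$ within $R$ hops of $i$, so $0\le r_i\le n-1$. Observe that the hypothesis $g(x+1)-g(x)>\alpha/2$ for all $x\in\{0,\ldots,n-2\}$ makes $g$ strictly increasing on $\{0,\ldots,n-1\}$.

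The engine of the proof is a single local improvement step: \textbf{if $G$ has two vertices $u,v$ with $\dist_G(u,v)>R$ (treating $\dist_G(u,v)=\infty$ when $u,v$ lie in different components), then $G':=G+\{u,v\}$ has strictly smaller social cost.} Indeed, adding an edge never increases any distance, so every $r_w$ is non-decreasing; moreover in $G$ the vertex $v$ is not within $R$ of $u$, so $r_u(G)\le n-2$ while $r_u(G')\ge r_u(G)+1$, and symmetrically for $v$. Using monotonicity of $g$ together with the hypothesis applied at $x=r_u(G)$ and $x=r_v(G)$ (both in $\{0,\ldots,n-2\}$), the total utility gain $\sum_w\bigl(g(r_w(G'))-g(r_w(G))\bigr)$ exceeds $\alpha/2+\alpha/2=\alpha$, which strictly outweighs the extra edge cost $\alpha$. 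Pinning down this quantitative bound --- in particular the bookkeeping that keeps the argument of $g$ in the range where the hypothesis is available --- is the one delicate point; I also expect the natural-looking but wrong idea of ``proving it by deleting cycle edges'' to be a trap, since deleting an edge can raise the cost (e.g.\ $C_5\to P_5$ for $R=2$), so the tree structure has to be extracted globally rather than by local surgery.

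Given the improvement step, every social optimum $G$ must satisfy $\dist_G(u,v)\le R$ for all $u,v$; i.e.\ $G$ is connected with diameter at most $R$. If $R=1$ this forces every two vertices to be adjacent, so $G=K_n$, and as $K_n$ is the unique graph with that property it is the unique social optimum --- the first claim. If $R>1$, connectivity together with diameter $\le R$ yields $r_i=n-1$ for every $i$, so $\sum_i g(r_i)=n\,g(n-1)$ is constant over all such $G$ and $\textsf{Cost}(G)=\alpha\,|E(G)|-n\,g(n-1)$; minimizing thus reduces to minimizing $|E(G)|$ over connected graphs of diameter $\le R$. Every connected graph has at least $n-1$ edges and the star $K_{1,n-1}$ attains $n-1$ with diameter $2\le R$, so (assuming $\alpha>0$) the social optima are precisely the trees on the $n$ vertices having diameter $\le R$; for $\alpha=0$ the situation degenerates, as then all connected graphs of diameter $\le R$ tie. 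Since any tree on $n$ vertices has diameter $\le n-1$, ``diameter $\le R$'' sharpens to ``diameter $\le\min(R,n-1)$'', completing the proof.
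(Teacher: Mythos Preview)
Your proof is correct and follows essentially the same strategy as the paper's: both argue via a local improvement step that any social optimum must have every pair of vertices within distance $R$, then conclude the clique for $R=1$ and, for $R>1$, use that once all $r_i=n-1$ the cost is $\alpha|E|-n\,g(n-1)$ so minimizing edges over connected graphs yields a tree. Your write-up is in fact somewhat more careful than the paper's (you verify $g$ is increasing so the other terms contribute non-negatively, you exhibit the star to confirm a tree of diameter $\le R$ actually exists, and you flag the $\alpha=0$ degeneracy), but the underlying argument is the same.
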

\begin{proof}
Let $s$ be any strategy profile. We say that an edge in $s$ is
\emph{redundant} if in the strategy profile $s'$ derived from $s$ by
dropping this edge, the $R$-neighborhood of all nodes remains the
same. Every non-redundant edge connecting a player with degree $x$
to a player with degree $y$ decreases the social cost by at least
$g(x+1)-g(x)+g(y+1)-g(y)-\alpha>0$. Naturally, every social optimum
$s$ will not have redundant edges. In the case $R=1$, the clique has
the most non-redundant edges, and thus is the only topology
resulting from the social optimum.

In the case $R>1$, suppose that the network described by $s$ is not
connected and does not contain redundant edges. Then every edge
connecting the components of the graph decreases the social cost by
a positive value. Hence, we can assume that the socially optimal
topology is connected.

Now suppose that the network has diameter $R'>R$. Consider two nodes
$i$ and $j$ such that $j$ is at distance $R'$ from $i$. Then adding
an edge connecting $i$ and $j$ increases the $R$-neighborhood of
each player by at least $1$ and thus decreases the social cost.
Therefore, the diameter of the social optimum topology is at most
$\min(R,n-1)$.

Finally, since over all connected graphs, trees have the least
number of edges and hence the cost is minimized, every social
optimum results in a tree. \qed\end{proof}

%

In a selfish setting, players are less likely to connect to each
other. Indeed, even for relatively small $\alpha$, nodes remain
isolated, resulting in a poor welfare.
\begin{lemma}
\label{lem:isolated} In the network creation game, the set of
isolated nodes is a Nash equilibrium if and only if $\forall
x\in\Nat_n$, $g(x)\leq x\alpha$.
\end{lemma}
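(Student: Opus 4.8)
The plan is to compute directly the cost of the candidate profile and of each of its unilateral deviations, and compare. Let $s$ be the profile in which every player pays for no connection, so the induced graph is edgeless (this is the only strategy profile realizing the ``set of isolated nodes'' topology). Since $g(0)=0$ and no node has a neighbor, $c_a(i,s)=\alpha\cdot 0 - g(0)=0$ for every player $i$. We are in the selfish setting, so $F$ is the identity and perceived cost equals actual cost; hence $s$ is a Nash equilibrium exactly when no single player $i$ can strictly decrease her actual cost by changing only her own strategy.

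Next I would characterize these unilateral deviations. If only player $i$ changes her strategy, every other player still pays for nothing, so every edge of the resulting graph $s'$ is an edge that $i$ pays for; thus $s'$ induces a star centered at $i$ with some number $k$ of leaves, where $0\le k\le n-1$ (player $i$ can connect to at most the $n-1$ other nodes). From $i$'s viewpoint all $k$ leaves lie at hop-distance $1$ and no node lies at distance $\ge 2$, so for any $R\ge 1$ we have $\sum_{j=1}^R |\Gamma^j(i,s')| = k$ and therefore $c_a(i,s') = \alpha k - g(k)$. The identities of the $k$ chosen nodes are irrelevant here, since all of them are isolated in $s$.

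Combining the two observations, $s$ is a Nash equilibrium iff for every player $i$ and every $k\in\{0,\dots,n-1\}$ we have $\alpha k - g(k)\ge c_a(i,s)=0$, i.e.\ $g(k)\le k\alpha$. The case $k=0$ is vacuous since $g(0)=0$, and the remaining cases range over all of $\Nat_n$, which is exactly the stated condition. For the \emph{if} direction one reads this equivalence from right to left (any deviation is non-improving); for the \emph{only if} direction one instantiates the Nash condition at the specific deviation in which $i$ connects to $k$ arbitrary other nodes and lets $k$ range over $\{1,\dots,n-1\}$.

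There is essentially no hard step: the only point requiring an explicit (one-line) argument is that a unilateral deviation against the all-empty background can only create a star, and that the center of such a star sees exactly $k$ other nodes within radius $R$ regardless of $R\ge 1$; I would spell this out. (If $R=0$ the neighborhood sum is always $0$, so the lemma should be read as implicitly assuming $R\ge 1$, consistently with Lemma~\ref{lem:social}.)
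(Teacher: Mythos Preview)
Your proposal is correct and follows essentially the same route as the paper's own proof: compute the cost of the empty profile, observe that any unilateral deviation amounts to paying for $k$ edges and gaining $g(k)$, and then read off the equivalence $g(k)\le k\alpha$ for all admissible $k$. Your write-up is more explicit (spelling out the star structure and the role of $R\ge 1$), but the argument is the same; the only cosmetic slip is the phrase ``the remaining cases range over all of $\Nat_n$''---since $\Nat_n=\{0,\dots,n-1\}$ already includes $k=0$, you mean that the full range $\{0,\dots,n-1\}$ coincides with $\Nat_n$ (with $k=0$ holding trivially).
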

\begin{proof}
Consider the strategy profile with no edges: $\forall j: s_j=0$. If
$\forall x\in\Nat_n$, $g(x)\leq x\alpha$, then unilaterally adding
$x$ edges may only increase the individual (actual) cost by at least
$\alpha x-g(x)$, so no node has an incentive to deviate. On the
other hand, if $\exists x\in\Nat_n$, $g(x)>x\alpha$,  then every
player has an incentive to add at least $x$ edges, and thus the
``isolated'' strategy cannot be an equilibrium. \qed\end{proof}

Lemmas~\ref{lem:social} and~\ref{lem:isolated} imply that in the
case $1<\alpha<2$, the cost of the social optimum in the
\emph{linear} network creation game (when $g(x)=x$) is
$n(n-1)(\alpha/2-1)$ for $R=1$ and $(n-1)(\alpha-2)$ for $R>1$,
while the cost of the worst Nash equilibrium is $0$, i.e.,
selfishness may bring the system to a highly suboptimal state.

Below we describe the conditions under which certain topologies,
like cliques and trees of bounded diameter, constitute Nash
equilibria of the network creation game.
\begin{lemma}
\label{lem:equilibria1} In the network creation game where $R=1$,
$\forall x\in\Nat_{\lfloor n/2 \rfloor}$,
such that $\forall y\in\Nat_{n-x}$: 
$g(2x)-g(x+y)\geq \alpha(x-y)$, every $2x$-regular graph constitutes
a Nash equilibrium.
\end{lemma}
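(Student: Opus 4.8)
The plan is to reduce the claim to a one‑variable optimization over the possible degree of a single player, exploiting that for $R=1$ a player's actual cost depends only on her own degree and on how many of her incident edges she pays for. First I would fix a $2x$‑regular graph $G$ together with a strategy profile $s$ realizing it in which every node pays for exactly $x$ of its $2x$ incident edges; such a profile exists because each connected component of a $2x$‑regular graph is Eulerian, and orienting the edges along an Eulerian circuit makes every node pay for exactly half of them. (Since $G$ has $nx$ edges, these balanced profiles are in fact the \emph{only} profiles realizing $G$ in which no node pays for more than $x$ edges, which is the natural reading of ``every $2x$‑regular graph constitutes a Nash equilibrium'' under the convention fixed above.) In $s$ every player $i$ has actual cost $\alpha x - g(2x)$, and since this is the anarchy subsection the relevant notion is the classical (perceived with identity matrix) Nash equilibrium.

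Next I would pin down exactly the set of configurations reachable by a unilateral deviation of $i$. The $x$ edges at $i$ that are paid for by their other endpoints stay present no matter what $i$ does, and paying for them redundantly can only increase $i$'s cost; the $x$ edges $i$ herself pays for she may keep or drop one by one; and she may add edges to any subset of the $n-1-2x$ current non‑neighbors. Hence, if $t$ denotes the number of edges $i$ pays for after deviating, the reachable configurations are precisely those in which $i$ has degree $x+t$ and payment $t$, for every $t\in\{0,1,\dots,n-1-x\}$ (all such values are attained; the upper bound is the simple‑graph constraint $\deg(i)\le n-1$). The resulting cost of $i$ is $\alpha t - g(x+t)$. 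Then the algebra is immediate: the deviation does not help iff $\alpha t - g(x+t)\ge \alpha x - g(2x)$, i.e.\ $g(2x)-g(x+t)\ge\alpha(x-t)$, and writing $y=t$ this is exactly the hypothesis $g(2x)-g(x+y)\ge\alpha(x-y)$ for all $y\in\Nat_{n-x}$. As this holds for every $i$ and every deviation, $s$ is a Nash equilibrium. As a consistency check, for $x=0$ (the empty graph) the hypothesis reduces, using $g(0)=0$, to $g(y)\le\alpha y$, recovering Lemma~\ref{lem:isolated}.

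The main obstacle is getting the deviation set exactly right — neither over‑ nor under‑counting. The key observations are that $i$ cannot destroy the $x$ externally‑paid edges, so her degree never drops below $x$; this is precisely why the hypothesis, which only constrains $g$ on arguments at least $x$, already suffices. A secondary subtlety concerns the phrase ``constitutes a Nash equilibrium'': the statement is genuinely about the balanced realizations, since a profile in which some player pays for more than $x$ of her edges could, for a slowly growing $g$, admit the beneficial deviation of dropping many of them (which would require the strictly stronger inequality $g(2x)-g(d')\ge\alpha(2x-d')$ for $d'<x$). Beyond these points the argument is routine, so I would keep the write‑up short: establish the balanced profile, enumerate the $t$‑parametrized deviations, and rearrange the inequality.
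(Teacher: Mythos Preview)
Your proposal is correct and follows essentially the same approach as the paper: fix a balanced profile in which each player pays for exactly $x$ of her $2x$ incident edges, parametrize a unilateral deviation by the number $y$ of edges the deviator pays for, and compare the resulting cost $\alpha y - g(x+y)$ with the current cost $\alpha x - g(2x)$. The paper's proof is a single terse sentence that omits the Eulerian justification for the existence of a balanced profile and the careful enumeration of reachable degrees you supply, but the underlying argument is identical.
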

\begin{proof}
Consider the strategy in which every player establishes $x$ outgoing
links so that the resulting topology is $2x$-regular. Unilaterally
establishing $y$ (non-redundant) links instead of $x$ (for any
$y\in\Nat_{n-x}$), a player pays the cost $\alpha y-g(x+y)\geq
\alpha x-g(2x)$, so no player has an incentive to deviate.
\qed\end{proof}

In the linear case with $R=1$ and $\alpha<1$,
Lemma~\ref{lem:equilibria1} implies that the clique is the only
regular graph that results from an equilibrium: the only $x$ that
satisfies the condition is $\lfloor n/2 \rfloor$. But in general,
the resulting network may consist of up to $\lfloor n/2x \rfloor$
disconnected cliques of $2x$ players each.


\begin{lemma}
\label{lem:equilibria2} In the network creation game with $R>1$,
where $g$ is a monotonically increasing function on $\Nat_n$ such
that $\alpha<g(n-1)$, every tree of diameter at most $\min(R,n-1)$
corresponds to a Nash equilibrium.
\end{lemma}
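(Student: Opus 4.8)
The plan is to exhibit, for every tree $T$ with diameter at most $\min(R,n-1)$, a strategy profile that realises $T$ and is a Nash equilibrium with respect to the actual costs (which, by the convention fixed above, is what ``$T$ corresponds to a Nash equilibrium'' asks for). The first thing to record is the observation that makes the lemma plausible: since $T$ is connected and has diameter at most $R$, for every node $i$ the sets $\Gamma^1(i,s),\dots,\Gamma^{R}(i,s)$ partition $\Pi\setminus\{i\}$, so $\sum_{j=1}^{R}|\Gamma^{j}(i,s)|=n-1$ and hence $c_a(i,s)=\alpha\,s_i-g(n-1)$. In words, in any profile realising $T$ the gain term is already at its maximum $g(n-1)$ for every player, so the only conceivable improvement is to pay for fewer links.

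Next I would pin down the profile. Root $T$ at a centroid $c$ — a vertex for which every subtree hanging off $c$ has at most $\lfloor n/2\rfloor$ vertices — and let each non-root vertex $v$ pay for the single edge $e_v$ joining $v$ to its parent, while $c$ pays for nothing. This realises $T$, assigns each edge to exactly one endpoint (so there are no redundantly paid links to shed), and gives $s_c=0$ and $s_v=1$ for $v\neq c$.

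Now I would eliminate unilateral deviations. Adding links only raises $s_i$ while leaving the $R$-neighbourhood at its maximum $n-1$, hence strictly raises $c_a(i,\cdot)$; in particular the root $c$, whose cost $-g(n-1)$ is already the least a player can attain, has no profitable move. For a vertex $v\neq c$ the only link under her control is $e_v$, and dropping $e_v$ together with adding new links is never strictly better than not deviating (re-attaching the detached subtree with one fresh link restores exactly the old cost, and doing more is worse). So it remains to compare not deviating with the plain drop of $e_v$: removing $e_v$ leaves $v$ inside the hanging subtree $T_v$, whose diameter is at most that of $T$ and hence at most $R$, so $v$'s new $R$-neighbourhood is exactly $|T_v|-1$ and her new cost is $\alpha(s_v-1)-g(|T_v|-1)=-g(|T_v|-1)$. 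Thus the drop is \emph{not} profitable precisely when $g(n-1)-g(|T_v|-1)\ge\alpha$.

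The main obstacle is exactly this last inequality. Monotonicity of $g$ and the centroid choice give $|T_v|-1\le\lfloor n/2\rfloor-1$, so it suffices that $g(n-1)-g(\lfloor n/2\rfloor-1)\ge\alpha$; and when $T_v$ is a single vertex — e.g. for every $v$ when $T$ is a star — this collapses to $g(n-1)\ge\alpha$, which is the stated hypothesis. Hence the bare hypothesis $\alpha<g(n-1)$ controls exactly the drop that \emph{isolates} the deviator (reach $\to 0$), and closing the argument for a general tree appears to need the mild strengthening $\alpha\le g(n-1)-g(|T_v|-1)$ for all $v$, equivalently $\alpha\le g(n-1)-g(\lfloor n/2\rfloor-1)$ under the centroid orientation (which is the payment assignment that makes the detached subtrees, and therefore this requirement, as small as possible). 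I would therefore either prove the lemma under this slightly stronger condition, or keep $\alpha<g(n-1)$ and restrict the statement to trees — stars in particular — in which, under the centroid orientation, dropping any paid edge isolates its owner.
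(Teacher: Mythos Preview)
Your approach mirrors the paper's: fix a rooted orientation, let every non-root pay for the edge to its parent, and check that neither adding nor dropping links helps. But your analysis is more careful, and the obstacle you isolate is real. The paper disposes of the drop case with the single line ``dropping edges increases the cost by at least $g(n-1)-g(n)$'', which is at best a typo and at worst vacuous (on the stated domain $g(n)$ is not even defined, and if $g$ were extended monotonically the bound would be non-positive). The correct change when a non-root $v$ drops her unique paid edge is $g(n-1)-g(|T_v|-1)-\alpha$, exactly as you compute, and the hypothesis $\alpha<g(n-1)$ forces this to be non-negative only when $|T_v|=1$, i.e., when $v$ is a leaf in the chosen orientation.

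So the lemma as stated, with only $\alpha<g(n-1)$, does not go through for every tree of bounded diameter: a path on $n$ vertices with $R\ge n-1$, $g$ linear and $\alpha$ between $\lceil n/2\rceil$ and $n-1$ gives a counterexample under any orientation. Your proposed repairs---strengthening the hypothesis to $\alpha\le g(n-1)-g(\lfloor n/2\rfloor-1)$ (which your centroid rooting makes the tightest uniform condition), or restricting to trees in which the rooted orientation has every non-root a leaf, i.e., stars---are the natural ways out. The paper does not address this; its proof simply has the gap you found.
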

\begin{proof}
Consider the strategy in which every node but one maintains one edge
so that the resulting graph is a tree of diameter at least
$\min(R,n-1)$. Therefore, $n-1$ players have the cost
$\alpha-g(n-1)$ and one player has the cost $-g(n-1)$. Every extra
edge would be redundant, and dropping edges increases the cost by at
least $g(n-1)-g(n)$. Thus, no player has an incentive to deviate.
\qed\end{proof}

Having described the classic setting with selfish players, we are
ready to tackle social contexts.

%
%


\subsection{Social Equilibria}

We now turn our attention to more general matrices $F$, where player
pairs $i$ and $j$ are embedded in a social context. For simplicity,
we focus on values $f_{ij}\in\{-1,0,\epsilon,1\}$ where $f_{ij}=-1$
signifies that player $i$ does not get along well with player $j$,
$f_{ij}=0$ signifies a neutral relation, and $f_{ij}=1$ signifies
friendship. We will sometimes assume that players care at least a
little bit about their own cost, i.e., $\forall i: f_{ii}=\epsilon$
for some arbitrarily small positive $\epsilon$. (This also implies
that a player will not pay for a link which is already paid for by
some other player.) We make two additional simplifications: we have
investigated the network creation game where players can only profit
from their direct neighbors (i.e., $R=1$) in more detail, and assume
a \emph{linear} scenario where the utility of being connected to
other players grows linearly in the number of contacts, that is, the
marginal utility of connecting to an additional player is constant:
we assume that $g(x)=x$.

Clearly, in this scenario, the cost $c_p(i,s)$ (and also $c_a(i,s)$)
of a player $i$ in a strategy profile $s$ is independent of
connections that are not incident to $i$. In this case, it holds
that any social matrix $F$ has a (pure) perceived equilibrium.
\begin{lemma}\label{lemma:eq}
In the linear network creation game with $R=1$, any social range
matrix $F$ has at least one pure perceived Nash equilibrium, for any
$\alpha$.
\end{lemma}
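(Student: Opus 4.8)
The plan is to exploit the locality of the linear, radius-one game: a player's perceived cost, viewed as a function of \emph{her own} strategy, decomposes additively over the $n-1$ potential edges incident to her, and each summand depends only on which (if any) of the two endpoints of that edge pays for it. Writing $a_{il}\in\{0,1\}$ for the indicator that $i$ pays for the edge $\{i,l\}$, so that $s_i=\sum_l a_{il}$, $\deg(i)=\sum_l(a_{il}\vee a_{li})$ and $c_a(l,s)=\alpha s_l-\deg(l)$, one obtains
\[
c_p(i,s)=C_i(s_{-i})+\sum_{l\neq i}\bigl(f_{ii}\,\alpha\,a_{il}-(f_{ii}+f_{il})(a_{il}\vee a_{li})\bigr),
\]
where $C_i(s_{-i})$ collects the terms that do not involve $i$'s strategy; here one must be careful to peel off exactly the $l=i$ contribution to each $\deg(l)$, $l\neq i$, since that single summand is what couples $c_p(i,s)$ to $a_{il}$. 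Because the remaining sum is separable, player $i$ best-responds edge by edge, and hence a profile $s$ is a perceived Nash equilibrium \emph{if and only if} for every pair $\{i,j\}$ the pair $(a_{ij},a_{ji})$ is a pure Nash equilibrium of the two-player ``edge game'' with cost $f_{ii}\alpha a-(f_{ii}+f_{ij})(a\vee b)$ for $i$ (choosing $a=a_{ij}$) and the symmetric expression for $j$ (choosing $b=a_{ji}$). It therefore suffices to produce a pure equilibrium of each such edge game and then assemble a global profile pair by pair.

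Next I would dispose of $\alpha=0$ directly: connections are free, so the complete graph in which \emph{every} player pays for all her incident edges is a perceived Nash equilibrium, since any unilateral change leaves every actual, hence every perceived, cost unchanged. For $\alpha>0$ I would analyse the edge game using the convention in force in this subsection, $f_{ii}=\epsilon>0$ (only $f_{ii}\geq 0$ is actually needed). Two facts drive it: (i) if the edge already exists, a player strictly prefers \emph{not} to pay for it redundantly, as that costs her $\epsilon\alpha>0$; and (ii) if the edge is absent, whether $i$ wants to create it is governed solely by the sign of $\gamma_i:=f_{ii}(\alpha-1)-f_{ij}$ (equivalently, by whether $f_{ii}+f_{ij}\geq f_{ii}\alpha$), and likewise for $j$ with $\gamma_j$. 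A three-way split then finishes it: if $\gamma_i\geq 0$ and $\gamma_j\geq 0$, the empty edge is an equilibrium; if $\gamma_i<0$, then ``$i$ pays, $j$ does not'' is an equilibrium (by (ii) $i$ keeps paying, by (i) $j$ free-rides rather than pay redundantly); and if $\gamma_i\geq 0$ but $\gamma_j<0$, symmetrically ``$j$ pays'' works. Equivalently, after ruling out redundant payments (weakly dominated when $f_{ii}\geq 0$) the edge game retains only the three outcomes ``no edge'', ``$i$ pays'', ``$j$ pays'' and admits the exact potential $\Phi_{\{i,j\}}\in\{0,\gamma_i,\gamma_j\}$ on them, so a minimiser is a pure equilibrium.

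Finally, choosing such a local equilibrium independently for every pair $\{i,j\}$ defines a global strategy profile, and by the separability established in the first step it is a perceived Nash equilibrium of the whole game. I expect the edge-game analysis to be the only real obstacle, and within it the delicate point is the sign of the diagonal: some control on $f_{ii}$ (the $\epsilon$ own-cost assumption already adopted here) is essential, because a player who is malicious toward herself would want to waste money on a link some other endpoint already pays for, and the corresponding edge game can then cycle through its four cells with no pure equilibrium, just like matching pennies.
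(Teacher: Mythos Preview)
Your argument is correct and follows essentially the same route as the paper: both exploit the separability of the linear $R=1$ game to reduce the global equilibrium question to independent per-edge decisions governed by the sign of $\gamma_i=f_{ii}(\alpha-1)-f_{ij}$, which is exactly the paper's $\Delta c_p(i)$. Your treatment is more careful than the paper's in two respects---you make the edge-game reduction explicit (including who pays when both endpoints want the link), and you correctly flag that the conclusion relies on $f_{ii}\geq 0$ (the $\epsilon$ convention adopted in this subsection), a hypothesis the paper's proof uses implicitly but the lemma statement omits.
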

\begin{proof}
Recall that
in the $R=1$ case, a player $i$ can only benefit from her neighbors,
that is, from a connection $\{i,j\}$ that either $i$ or the
corresponding neighbor $j$ paid for. Player $i$ will pay for the
connection to player $j$ if and only if the gain from this link is
larger than the link cost $\alpha$. We have that by establishing a
new connection from player~$i$ to player~$j$, the cost changes by
$\Delta~c_p(i)= f_{ii} \cdot \left( \alpha - 1 \right) - f_{ij}
\cdot 1$. If this cost is not larger than zero, it is worthwhile for
player~$i$ to connect; otherwise it is not. On the other hand,
player~$j$ will pay for a connection to player~$i$ iff $
\Delta~c_p(j)= f_{jj}\cdot \left( \alpha - 1 \right) - f_{ji} \cdot
1 \geq 0$. As the decision of whether to connect to a given player
or not does not depend on other connections, and as links cannot be
canceled unilaterally, the resulting equilibrium network is unique
assuming that the players will not change to a strategy of equal
cost. \qed\end{proof}

Observe that the equilibrium topology found in Lemma~\ref{lemma:eq}
is only unique if the cost inequalities $\Delta~c_p$ are strict.
Moreover, a given equilibrium topology can result from different
strategy profiles, namely if there are connections where both
players have an incentive to pay for the connection to each other.

Intuitively, we would expect that the network formed by the players
reflects the social context the players are embedded in. This can be
exemplified in several ways. The following lemma shows that for the
case of binary social matrices, there are situations where the
social matrix translates directly into an equilibrium adjacency
matrix.
\begin{lemma}\label{lemma:adjacency}
In the linear network creation game with $R=1$, assume a binary
social matrix $F$ where $\forall i,j: f_{ij}\in\{0,1\}$ and where
each player is aware of her own cost, i.e., $\forall i: f_{ii}>0$.
Then, for $1<\alpha <2$, there is an equilibrium topology that can
be described by the adjacency matrix $F'$ derived from $F$ in the
following manner: (1) $\forall i: f'_{ii}=0$ and (2) if $f_{ij}=1$
for some $i,j$, then $f'_{ij}=1$ and $f'_{ji}=1$.
\end{lemma}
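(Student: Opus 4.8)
The plan is to exhibit one concrete strategy profile whose underlying graph is exactly $F'$, and then verify it is a perceived Nash equilibrium by the same edge-by-edge accounting used in the proof of Lemma~\ref{lemma:eq}. Concretely, I would let player $i$ pay for the link $\{i,j\}$ whenever $f_{ij}=1$ and $i\neq j$, except that when $f_{ij}=f_{ji}=1$ the tie is broken in favour of the smaller index (say $i$ pays iff $i<j$). In the resulting undirected graph $G$, the edge $\{i,j\}$ is present precisely when $f_{ij}=1$ or $f_{ji}=1$, which is exactly the matrix $F'$ obtained from $F$ by zeroing the diagonal and symmetrizing. So it only remains to show that no player can strictly reduce her perceived cost by a unilateral deviation.

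For the equilibrium check, fix a player $i$ and hold everyone else's strategy fixed. Write $B_i$ for the set of neighbours $j$ of $i$ whose incident link $\{i,j\}$ is paid for by $j$; crucially $B_i$ does not depend on $i$'s own choice, since $i$ controls only the links she pays for, and in the $R=1$ linear game only links incident to $i$ affect the actual costs that enter $c_p(i,\cdot)$. If $i$ chooses to pay for a set $A_i$ of links, then her neighbourhood is $A_i\cup B_i$, her own actual cost is $\alpha|A_i|-|A_i\cup B_i|$, and the only other actual costs that move relative to the fixed configuration are those of the players in $A_i\setminus B_i$, each of whom loses exactly one unit of utility. Substituting into $c_p(i,s)=\sum_j f_{ij}c_a(j,s)$ and dropping everything independent of $A_i$, one gets, up to an additive term independent of $A_i$,
\[
c_p(i,s)\;=\;f_{ii}\,\alpha\,|A_i\cap B_i|\;+\;\sum_{j\in A_i\setminus B_i}\bigl(f_{ii}(\alpha-1)-f_{ij}\bigr).
\]

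Now the hypothesis $1<\alpha<2$ together with $f_{ii}=1$ does all the work. The first summand is nonnegative and vanishes iff $i$ pays for no link already paid by its other endpoint; and the $j$-th term of the second sum equals $\alpha-2<0$ when $f_{ij}=1$ and $\alpha-1>0$ when $f_{ij}=0$. Hence $c_p(i,\cdot)$ has the \emph{unique} minimizer $A_i=\{\,j\neq i : f_{ij}=1,\ j\notin B_i\,\}$, and a short check confirms that, with the smaller-index tie-break above, this is precisely the set of links $i$ pays for in the constructed profile. Therefore every player is already playing her unique best response, the profile is a perceived Nash equilibrium, and its topology is $F'$, as claimed. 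I expect the only real subtlety to be the bookkeeping around which endpoint pays for a shared edge: one must check that $B_i$ really is invariant under $i$'s deviations, that the term $f_{ii}\alpha|A_i\cap B_i|$ correctly penalizes double-paying, and that the tie-breaking rule defining the profile agrees with the best-response set just computed. Everything else — the marginal effect $\alpha-1$ on $i$'s own cost and $-1$ on a newly acquired neighbour's cost, and the sign bookkeeping — is routine given Lemma~\ref{lemma:eq}. Note finally that the statement only claims \emph{existence} of such an equilibrium topology; it does not exclude other equilibria, e.g.\ ones realizing the same graph with links paid differently, or entirely different graphs.
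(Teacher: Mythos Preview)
Your argument is correct and follows essentially the same edge-by-edge marginal analysis as the paper: the paper simply observes that the change in perceived cost from paying for a new link $\{i,j\}$ is $\Delta c_p(i)=f_{ii}(\alpha-1)-f_{ij}=\alpha-1-f_{ij}$ (using $f_{ii}=1$, which follows from $f_{ii}\in\{0,1\}$ and $f_{ii}>0$), hence $i$ pays iff $f_{ij}=1$. Your version is more thorough in that you explicitly construct a profile with a tie-breaking rule, carry out the full best-response computation including the double-paying penalty $f_{ii}\alpha|A_i\cap B_i|$, and verify that the constructed $A_i$ coincides with the unique minimizer.

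One small slip: where you write that each player in $A_i\setminus B_i$ ``loses exactly one unit of utility,'' you mean the opposite --- each such $j$ \emph{gains} a neighbour and hence her actual cost drops by $1$. Your formula $f_{ii}(\alpha-1)-f_{ij}$ is nonetheless correct, so this is only a wording issue.
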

\begin{proof}
The claim follows from the simple observation that for $1<\alpha<2$,
a player $i$ is willing to pay for a connection to a player $j$ if
and only if $f_{ij}=1$, as the cost difference is given by
$\Delta~c_p(i)= \alpha - f_{ii}-f_{ij}$: If $f_{ij}=0$, player~$i$
only connects if $\alpha\leq 1$, and if $f_{ij}=1$, it is worthwhile
to pay for the connection as long as $\alpha\leq 2$. Therefore, as
long as $1<\alpha<2$, one endpoint will pay for the link $\{i,j\}$
(and thus: $f'_{ij}=1$ and $f'_{ij}=1$) if $f_{ij}=1$ or $f_{ji}=1$.
Clearly, it also holds that there are no loops ($\forall i:
f'_{ii}=0$). \qed\end{proof}

Note that the condition that each player cares about her own cost is
necessary for Lemma~\ref{lemma:adjacency} to hold; otherwise, if
$f_{ii}=0$, a player could trivially connect to all players as this
does not entail any connection costs. In this case, the social
matrix still describes a valid equilibrium adjacency matrix,
however, there are many other equilibria with additional edges.

\subsection{Use Case: Anarchy vs Monarchy}

A natural question to investigate in the context of social ranges is
the relationship between a completely selfish society (in game
theory also known as an \emph{anarchy}) and a society with an
outstanding individual that unilaterally determines the cost of the
players (henceforth referred to as a \emph{monarchy}); as already
mentioned, we assume that the players always care a little bit about
their own actual costs, and hence in the monarchy, let $\forall i:
f_{ii}=\epsilon$ for some arbitrarily small $\epsilon>0$, and let
$\forall i: f_{ij}=1$ where player~$j$ is the monarch (we assume
$f_{ji}=0$ for all $i\neq j$).

Interestingly, while there are situations where a monarchy yields a
higher social welfare, the opposite is also true as there are
settings that are better for anarchies. The following result
characterizes social optima, and Nash equilibria for anarchy and
monarchy settings.
\begin{lemma}\label{lemma:optanmon}
In the linear network creation game with $R=1$, the social optimum
cost is $(\alpha/2-1) n(n-1)$ if $\alpha < 2$ and $0$ otherwise, and
the anarchy has social cost $(\alpha/2-1) n(n-1)$ if $\alpha \leq 1$
and $0$ otherwise. For the monarchy, there can be multiple
equilibria (of the same cost): for any $\alpha$, there is always an
equilibrium with cost $(\alpha-2)(n-1)$; moreover, if $\alpha\leq 1$
there is an additional equilibrium with the same cost.
\end{lemma}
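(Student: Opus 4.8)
\emph{Approach.} The plan is to reduce all three parts to counting edges. First I would prove the identity $\textsf{Cost}(s)=(\alpha-2)\,|E(s)|$, where $|E(s)|$ is the number of edges of the graph induced by $s$: since $R=1$ and $g(x)=x$ we have $c_a(i,s)=\alpha s_i-\deg(i)$, and summing over all players gives $\textsf{Cost}(s)=\alpha\sum_i s_i-\sum_i\deg(i)=(\alpha-2)\,|E(s)|$, because each edge is paid for by exactly one endpoint and contributes $2$ to the degree sum. With this in hand, every cost in the statement is just a count of edges, so the task becomes: identify the optimal and the equilibrium topologies in each of the three settings.

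\emph{Social optimum and anarchy.} If $\alpha<2$ then $\alpha-2<0$, so a social optimum maximizes $|E(s)|$, i.e., it is the clique, of cost $(\alpha-2)\binom n2=(\alpha/2-1)n(n-1)$; Lemma~\ref{lem:social} (whose hypothesis $g(x{+}1)-g(x)=1>\alpha/2$ holds here) even certifies that this is the only optimum for $R=1$. If $\alpha\ge2$ the optimum minimizes $|E(s)|$, hence is the empty graph, of cost $0$. For the anarchy $F$ is the identity and $c_p=c_a$, so adding or dropping an incident edge one pays for changes a player's cost by $\pm(\alpha-1)$. Thus for $\alpha<1$ every absent edge would be added by an endpoint, so the clique is the unique equilibrium (cf.\ Lemma~\ref{lem:equilibria1}), of cost $(\alpha/2-1)n(n-1)$; for $\alpha\ge1$ the empty graph is an equilibrium by Lemma~\ref{lem:isolated} ($g(x)=x\le x\alpha$), of cost $0$, and it is the only one when $\alpha>1$ since the payer of any present edge strictly gains $1-\alpha$ by dropping it. The borderline $\alpha=1$ is handled with the tie-breaking convention of Lemma~\ref{lemma:eq}.

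\emph{Monarchy.} Let $j$ be the monarch. Its row is $\epsilon$ on the diagonal and $0$ elsewhere, so by Lemma~\ref{prop:factor} it rescales to the selfish row and $j$ behaves as a purely selfish player, while a non-monarch $i$ has $c_p(i,s)=\epsilon\,c_a(i,s)+c_a(j,s)$. I would compute the two relevant marginal moves for $i$: adding the edge $\{i,j\}$ raises $i$'s own cost by $\alpha-1$ and lowers $j$'s cost by $1$, a net change $\epsilon(\alpha-1)-1<0$ once $\epsilon$ is small relative to $\alpha$, so $i$ always wants to be joined to $j$; adding an edge $\{i,k\}$ to another non-monarch changes $c_p(i,s)$ by exactly $\epsilon(\alpha-1)$ (it does not touch $c_a(j,s)$), which is profitable iff $\alpha<1$. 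Consequently, for $\alpha\ge1$ the star centered at $j$ with each non-monarch paying for her one central edge is a perceived Nash equilibrium: a non-monarch neither profits from a non-central edge (the change $\epsilon(\alpha-1)$ is non-negative, using the tie-break at $\alpha=1$) nor from dropping her central edge, and the selfish $j$ would only waste money by paying for any link; by the edge-count identity its cost is $(\alpha-2)(n-1)$. For the ``additional equilibrium of the same cost'' I would use the alternative payment scheme --- the monarch paying for all $n-1$ central edges: when $\alpha\le1$ the monarch has no incentive to shed one of them (dropping costs $1-\alpha\ge0$) and the non-monarchs, indifferent about non-central edges at $\alpha=1$, add none, so this is again an equilibrium of the same star topology, hence of cost $(\alpha-2)(n-1)$.

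\emph{Main obstacle.} The identity of the first paragraph and the optimum/anarchy bookkeeping are routine; the delicate part is the monarchy, and specifically the boundary $\alpha=1$. The non-monarchs' incentive to build non-central edges and the monarch's incentive to pay for central edges both switch sign exactly at $\alpha=1$, which is why the two payment realizations of the star are equilibria on opposite sides of $\alpha=1$ and coincide at it --- so ``$\alpha\le1$'' is really the boundary case, and for $\alpha$ strictly below $1$ the honest picture is that the monarchy's unique equilibrium collapses to the same clique as the anarchy's. The other step needing care is showing that for $\alpha>1$ no non-central edge can survive in any equilibrium (its payer always strictly profits by dropping it), which pins the star as essentially the unique equilibrium there; once that is done, every cost claimed in the lemma follows from the edge-count identity.
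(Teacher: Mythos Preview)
Your approach is essentially the paper's: both proofs hinge on the edge-count identity $\textsf{Cost}(s)=(\alpha-2)|E(s)|$ (the paper writes it as $k\alpha-2k$) and then do the same marginal bookkeeping for each of the three regimes, invoking Lemmas~\ref{lem:social}, \ref{lem:isolated}, and~\ref{lem:equilibria1} at the same points.

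Where you actually go further than the paper is in the monarchy analysis. The paper's proof only checks edges incident to the monarch and concludes the star has cost $(\alpha-2)(n-1)$ ``for any $\alpha$''; it never examines a non-monarch's incentive to build a \emph{non-central} edge. You do, and you correctly compute that this marginal change is $\epsilon(\alpha-1)$, which is strictly negative for $\alpha<1$. Your ``main obstacle'' paragraph is therefore not a difficulty in your argument but a genuine observation about the lemma: for $\alpha$ strictly below $1$, the star with $n-1$ edges is \emph{not} a perceived equilibrium (every non-monarch profits by adding peer edges), and the monarchy collapses to the clique just like the anarchy. The paper's ``for any $\alpha$'' and ``additional equilibrium if $\alpha\le 1$'' really only stand cleanly at the boundary $\alpha=1$ (or in the informal limit $\epsilon\to 0$ the paper gestures at with ``up to the arbitrarily small $\epsilon$ components''). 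Your write-up should make this explicit rather than leave it as an obstacle; otherwise your argument is both correct and more careful than the original.
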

\begin{proof}
We consider the social optimum, the anarchy and the monarchy in
turn.

\emph{Social optimum:}
If $\alpha\leq 2$, then Lemma~\ref{lem:social} implies that any
social optimum implies the clique, with the cost $(\alpha/2-1)
n(n-1)$.  If $\alpha>2$, then every non-redundant link increases the
social cost by $\alpha-2$ and thus the set of isolated nodes has the
minimal cost, $0$.

Observe that the social cost is given by the total number of edges
$k$ in the network: $k$ edges yield a connection cost of
$k\cdot\alpha$, and the players are connected to $2k$ other players,
thus $\textsf{Cost}(s)=k\cdot\alpha-2k$.
Using Lemma~\ref{lem:social}, for the social optimum we have:
$$
\min_s \textsf{Cost}(s) = \min_k k\cdot\alpha-2k = \begin{cases}
 (\alpha/2-1)n(n-1)  & \text{, if }\alpha\leq 2\text{ (clique)}\\
  0 & \text{, otherwise (disconnected).}
\end{cases}
$$

\emph{Anarchy:} In a purely selfish setting, a player connects to
another player if and only if $\alpha\leq 1$. By
Lemmas~\ref{lem:social} and~\ref{lem:isolated}, if $\alpha\leq 1$,
then the resulting equilibrium topology is the clique and the cost
is thus $(\alpha/2-1) n(n-1)$, and if $\alpha<1$, then the resulting
topology is the set of isolated nodes and the cost is $0$.

$$
\textsf{Cost}(\text{Nash equilibrium}) =
\begin{cases}
 (\alpha/2-1) n(n-1)  & \text{, if }\alpha\leq 1 \text{ (clique)}\\
  0 & \text{, otherwise (disconnected).}
\end{cases}
$$

\emph{Monarchy:} Let $j$ denote the monarch and let $i\neq j$ denote
any other player. Since the marginal utility of an additional
neighbor of $j$ is one while the connection cost is arbitrarily
small ($\alpha\epsilon$), a player~$i$ will always connect (i.e.,
pay for the connection) to the monarch. On the other hand, the
monarch will connect to a player if and only if $\alpha \leq 1$. The
social cost of the network equilibrium is therefore always
$(\alpha-2)(n-1)$ (up to the arbitrarily small $\epsilon$ components
in the cost), for any $\alpha$. \qed\end{proof}

Using Lemma~\ref{lemma:optanmon}, we can compare the efficiency of
the different social settings. For relatively low connection costs,
a setting with a monarch gives stronger incentives for nodes to
connect, and thus socially more preferable outcomes emerge. On the
other hand, for high connection costs, due to the selfless behavior
of the players ignoring their own connection prices, an anarchy is
preferable. As a concrete example, according to
Lemma~\ref{lemma:optanmon}, for $\alpha=3/2$, the equilibrium
network of the monarchic society is a star of utility $(n-1)/2$
while in the anarchy nobody will connect, yielding a utility of
zero. On the other hand, for $\alpha=3$, the anarchy again has zero
utility, while in the monarchy, players still connect which results
in a negative overall utility of $-(n-1)$. Thus, the following lemma
holds.
\begin{lemma}
There are situations where the social welfare of anarchy is higher
than the welfare in a monarchy, and vice versa.
\end{lemma}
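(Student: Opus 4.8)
The plan is to prove this purely by exhibiting two concrete instances of the linear network creation game with $R=1$, and invoking Lemma~\ref{lemma:optanmon} to read off the social costs (equivalently, the social welfare, which is just the negation) of the anarchy and monarchy equilibria in each case.

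First I would fix a population size $n\geq 2$ and pick a connection cost $\alpha$ strictly between $1$ and $2$, say $\alpha=3/2$. By Lemma~\ref{lemma:optanmon}, since $\alpha>1$ the unique anarchy equilibrium is the set of isolated nodes with cost $0$ (hence welfare $0$), whereas the monarchy admits an equilibrium of cost $(\alpha-2)(n-1)=-(n-1)/2<0$, i.e.\ a star centered at the monarch with welfare $(n-1)/2>0$. Thus the monarchy strictly outperforms the anarchy here. Second I would take $\alpha>2$, say $\alpha=3$. Again by Lemma~\ref{lemma:optanmon}, the anarchy equilibrium still has cost $0$ and welfare $0$, while the monarchy equilibrium has cost $(\alpha-2)(n-1)=n-1>0$, i.e.\ welfare $-(n-1)<0$, because each non-monarch still pays (the arbitrarily small) $\alpha\epsilon$ to connect to the monarch even though the links are socially wasteful. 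Hence in this regime the anarchy strictly outperforms the monarchy.

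Combining the two instances gives the claim. There is essentially no obstacle: the only thing to be a little careful about is that Lemma~\ref{lemma:optanmon} is stated in terms of social \emph{cost}, so I would state explicitly once that social welfare equals the negative of social cost (consistent with the actual cost $c_a(i,s)=\alpha s_i - g(\cdot)$ being a genuine cost), and that the ``up to the arbitrarily small $\epsilon$ components'' caveat in the monarchy case does not affect the sign of the comparison, since for $n$ large (or $\epsilon$ small enough) the $\Theta(n)$ gap dominates the $O(\epsilon n)$ correction. No new computation beyond what already precedes the lemma statement is needed.
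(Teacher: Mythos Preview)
Your proposal is correct and matches the paper's own argument essentially line for line: the paper also invokes Lemma~\ref{lemma:optanmon} with the very same two parameter choices $\alpha=3/2$ (monarchy better) and $\alpha=3$ (anarchy better). Your added remarks about welfare being the negation of cost and the $\epsilon$-correction being negligible are fine but are implicit in the paper's treatment.
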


\subsection{Windfall of Friendship and Price of Ill-Will}

An interesting property of our network creation game is that more
friendship relations cannot worsen an equilibrium.
\begin{lemma}\label{lemma:numofones}
Consider a social range matrix $F$ where $\forall i,j:
f_{ij}\in\{0,1\}$ and $f_{ii}=1$. Let $F'$ be a social range matrix
derived from $F$ where a non-empty set $\mathcal{N}$ of
$0$-entries in $F$ are flipped to $1$. 
Then, for any equilibrium strategy $s^F$ with respect to a social
matrix $F$, there is an equilibrium strategy $s^{F'}$ with
$\textsf{Cost}(s^{F'})\leq \textsf{Cost}(s^F)$.
\end{lemma}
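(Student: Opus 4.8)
The plan is to combine the explicit equilibrium description obtained in the proof of Lemma~\ref{lemma:eq} with the observation that, in the linear game with $R=1$, the social cost of \emph{any} equilibrium depends only on its number of edges. Concretely: since $f_{ii}=1$, a player $i$ has a (weak) incentive to pay for the link $\{i,j\}$ exactly when $\Delta c_p(i)=(\alpha-1)-f_{ij}\le 0$, i.e.\ when $f_{ij}\ge\alpha-1$; and at any equilibrium no link is paid for by both endpoints, since for $\alpha>0$ dropping such a redundant payment strictly lowers the perceived cost (by $\alpha f_{ii}=\alpha$). Hence, writing $m$ for the number of edges of an equilibrium $s$, one has $\sum_j s_j=m$ (the sum is void when $\alpha=0$) and $\sum_j|\Gamma^1(j,s)|=2m$, so $\textsf{Cost}(s)=\alpha m-2m=(\alpha-2)m$. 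Thus causing more links to appear in an equilibrium lowers its social cost precisely when $\alpha<2$.

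Next I would split on $\alpha$ and, in each case, exhibit a suitable equilibrium $s^{F'}$. If $\alpha\ge 2$, then for entries in $\{0,1\}$ no player ever strictly wants to add a link ($\Delta c_p(i)\ge\alpha-2\ge 0$), so the edgeless profile is an equilibrium under both $F$ and $F'$; taking it as $s^{F'}$ gives $\textsf{Cost}(s^{F'})=0\le(\alpha-2)m_F=\textsf{Cost}(s^F)$. If $\alpha\le 1$, then no player ever strictly wants to drop a link either, so the clique (with each edge paid by one endpoint) is an equilibrium under both $F$ and $F'$; taking it as $s^{F'}$ gives $\textsf{Cost}(s^{F'})=(\alpha-2)\binom{n}{2}\le(\alpha-2)m_F=\textsf{Cost}(s^F)$, using $\alpha-2<0$ and $m_F\le\binom{n}{2}$. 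Finally, if $1<\alpha<2$, all these marginal inequalities are strict, so the equilibrium topology is unique and equals $\{\{i,j\}:f_{ij}=1\text{ or }f_{ji}=1\}$; since $F'$ arises from $F$ by turning $0$'s into $1$'s, the $F'$-equilibrium edge set contains that of $F$, whence $m_{F'}\ge m_F$ and $\textsf{Cost}(s^{F'})=(\alpha-2)m_{F'}\le(\alpha-2)m_F=\textsf{Cost}(s^F)$.

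I expect the only real subtlety to be the threshold cases $\alpha=1$ and $\alpha=2$, where equilibria are not unique: there the argument cannot appeal to uniqueness but must instead point to a concrete equilibrium of $F'$ (the clique, resp.\ the edgeless profile) and use that \emph{every} equilibrium of $F$ already has cost $(\alpha-2)m_F$, together with $m_F\le\binom{n}{2}$ when $\alpha=1$ and with the per-edge cost $\alpha-2$ vanishing when $\alpha=2$. Everything else is the routine marginal-cost bookkeeping already carried out for Lemma~\ref{lemma:eq}.
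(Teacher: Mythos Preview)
Your argument is correct and rests on the same two observations as the paper: in the linear $R=1$ game the social cost of an equilibrium is $(\alpha-2)m$ where $m$ is the number of edges, and increasing entries of $F$ can only enlarge the set of links players are willing to pay for. The organization differs slightly: the paper fixes $s^F$ and builds $s^{F'}$ from it by noting that every edge of $s^F$ survives under $F'$ (since the incentive $\Delta c_p(i)=(\alpha-1)-f_{ij}$ can only decrease when $f_{ij}$ is flipped from $0$ to $1$), hence $m_{F'}\ge m_F$ directly; you instead split on $\alpha$ and in the ranges $\alpha\le 1$ and $\alpha\ge 2$ point to a canonical $F'$-equilibrium (clique, resp.\ edgeless) rather than one derived from $s^F$. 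Your case split is a bit more explicit about the non-uniqueness at the thresholds $\alpha\in\{1,2\}$, which the paper glosses over; conversely, the paper's ``edges persist'' argument is more uniform and avoids the three-way split. Both routes are short and neither buys anything substantial over the other.
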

\begin{proof}
We prove the claim by showing that for any equilibrium strategy
$s^F$ for $F$, there is an equilibrium strategy $s^{F'}$ for $F'$
that has at least as many connections as $s^F$. Moreover, it holds
that an equilibrium with more connections always implies a higher
social welfare.

First, recall from Lemma~\ref{lemma:eq} that an equilibrium $s^F$
always exists. Now fix such an equilibrium $s^F$ from which we will
construct the equilibrium $s^{F'}$. If $i$ and $j$ are connected in
$s^F$, then they are still connected in $s^{F'}$: as $R=1$, whether
or not a connection $\{i,j\}$ between two players $i$ and $j$ is
established depends on the actual costs $c_a(i,\cdot)$ and
$c_a(j,\cdot)$ only. If two players $i$ and $j$ are not connected in
$s^F$, they have an incentive to connect in $s^{F'}$ if $f'_{ij}=1$
and $\alpha\leq 2$. Thus, $s^{F'}$ contains a superset of the
connections in $s^F$. Now let $k$ be the number of edges in a given
profile $s$. The social cost is then given by $\textsf{Cost}(s)=
k\alpha - 2k$. For $\alpha\leq 2$, this function is monotonically
decreasing, which implies the claim. On the other hand, for $\alpha
>2$, the set of isolated nodes constitutes the only equilibrium.
\qed\end{proof}

Lemma~\ref{lemma:numofones} implies that the best equilibrium with
respect to $F'$ cannot be worse than the best equilibrium with
respect to $F$. On the other hand, it is easy to see that a similar
claim also holds for the \emph{worst} equilibrium: Consider the
equilibrium $s^{F'}$ with the fewest connections $k'$. Then, there
is an equilibrium $s^F$ with $k\leq k'$ edges: either $s^F=s^{F'}$,
or some edges can be removed. We have the following claim.
\begin{corollary}
Consider a social range matrix $F$ where $\forall i,j:
f_{ij}\in\{0,1\}$ and $f_{ii}=1$. Let $F'$ be a social range matrix
that is derived from $F$ by flipping one or several $0$ entries to
$1$. Let $s_{w}^F$ and $s_{b}^F$ be the worst and the best
equilibrium profile w.r.t.~$F$, and let $s_{w}^{F'}$ and
$s_{b}^{F'}$ be the worst and best equilibrium profile w.r.t.~$F'$
(maybe $s_{w}^F=s_{b}^F$ and/or $s_{w}^{F'}=s_{b}^{F'}$). It holds
that $\textsf{Cost}(s_w^F)\geq \textsf{Cost}(s_w^{F'})$ and
$\textsf{Cost}(s_b^F)\geq \textsf{Cost}(s_b^{F'})$.
\end{corollary}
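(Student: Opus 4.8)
The two inequalities call for separate arguments, and the one about the \emph{best} equilibria is essentially a one-line consequence of Lemma~\ref{lemma:numofones}: applying that lemma to the equilibrium $s_b^F$ yields an equilibrium $s^{F'}$ w.r.t.~$F'$ with $\textsf{Cost}(s^{F'})\le\textsf{Cost}(s_b^F)$, and since $s_b^{F'}$ is by definition a minimum-cost equilibrium w.r.t.~$F'$, we conclude $\textsf{Cost}(s_b^{F'})\le\textsf{Cost}(s^{F'})\le\textsf{Cost}(s_b^F)$.

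For the \emph{worst} equilibria I would run the dual of the argument used for Lemma~\ref{lemma:numofones}, as hinted at just before the corollary. Recall that $\textsf{Cost}(s)=(\alpha-2)k$, where $k$ is the number of edges of $s$, so a worst (highest-cost) equilibrium is a fewest-edges one. If $\alpha>2$, every edge present in any equilibrium of either game would be dropped by its payer (because $\alpha>1+f$ for all relevant entries $f\le 1$), so on both sides the only equilibrium is the empty graph and both costs are $0$; hence assume $\alpha\le 2$. Let $s_w^{F'}$ have $k'$ edges, and define a profile $s^F$ by retaining exactly those edges $\{i,j\}$ of $s_w^{F'}$ with $\alpha\le 1+\max(f_{ij},f_{ji})$, letting the paying endpoint of each retained edge be one whose $f$-value towards the other endpoint attains $\max(f_{ij},f_{ji})$. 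Then $s^F$ has $k\le k'$ edges, and I claim it is a perceived Nash equilibrium w.r.t.~$F$: the chosen payers have no incentive to drop their edges, and nobody has an incentive to add a missing edge $\{i,j\}$ --- if some player strictly preferred adding $\{i,j\}$ under $F$ (that is, $\alpha<1+f_{ij}$), the same would hold under $F'$ (as $f'_{ij}\ge f_{ij}$), so $\{i,j\}$ would be present in every $F'$-equilibrium, in particular in $s_w^{F'}$, and since then $\alpha<1+\max(f_{ij},f_{ji})$ it would have been retained in $s^F$, a contradiction; the boundary cases $\alpha=1+f_{ij}$ and $\alpha=2$ are absorbed by the convention (in force from Lemma~\ref{lemma:eq} on) that a player never switches to an equal-cost strategy. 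Since $\alpha\le 2$, the map $k\mapsto(\alpha-2)k$ is non-increasing, so $\textsf{Cost}(s^F)=(\alpha-2)k\ge(\alpha-2)k'=\textsf{Cost}(s_w^{F'})$, and since $s_w^F$ is the highest-cost equilibrium w.r.t.~$F$, $\textsf{Cost}(s_w^F)\ge\textsf{Cost}(s^F)\ge\textsf{Cost}(s_w^{F'})$, as desired.

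The routine parts are the cost identity $\textsf{Cost}(s)=(\alpha-2)k$ and the elementary monotonicity of $(\alpha-2)k$ in $k$. The step that needs care is checking that the reconstructed profile $s^F$ --- in which an edge may be paid for by a different endpoint than in $s_w^{F'}$ --- really is a perceived Nash equilibrium: because $R=1$, any unilateral deviation changes only the two actual costs incident to the edge added or removed, which reduces the verification to an edge-by-edge case analysis, but one still has to dispatch the tie cases $\alpha=1+f_{ij}$ and $\alpha=2$ consistently with the tie-breaking convention and to make sure $E(s^F)\subseteq E(s_w^{F'})$ so that $k\le k'$ genuinely holds.
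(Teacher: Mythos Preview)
Your proof is correct and follows the same two-part outline as the paper: the best-equilibrium inequality via Lemma~\ref{lemma:numofones}, and the worst-equilibrium inequality by starting from the fewest-edges $F'$-equilibrium and producing an $F$-equilibrium with no more edges. The paper's argument for the latter is a single sentence (``either $s^F=s^{F'}$, or some edges can be removed''), whereas you spell out precisely which edges to remove, how to reassign payers, and why the resulting profile is a perceived Nash equilibrium under $F$; the $\alpha>2$ case split and the handling of tie cases are your additions and are fine.
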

A analogous result can be obtained for settings where players
dislike each other.
\begin{lemma}\label{lemma:numofminusones}
Consider a social range matrix $F$ where $\forall i,j:
f_{ij}\in\{-1,0\}$ and $f_{ii}=1$. Let $F'$ be a social range matrix
derived from $F$ where a non-empty set $\mathcal{N}$ of $0$-entries
in $F$ are flipped to $-1$s., where $|\mathcal{N}|\geq 1$. Then, for
any equilibrium strategy $s^F$ with respect to a social matrix $F$,
there is an equilibrium strategy $s^{F'}$ with
$\textsf{Cost}(s^F)\leq \textsf{Cost}(s^{F'})$.
\end{lemma}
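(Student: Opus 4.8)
The plan is to mirror the proof of Lemma~\ref{lemma:numofones}, tracking how flipping $0$-entries to $-1$ changes each player's incentive to pay for a link. As before, we work in the linear game with $R=1$, so whether edge $\{i,j\}$ is present in an equilibrium depends only on the two actual costs $c_a(i,\cdot)$ and $c_a(j,\cdot)$, and a player $i$ is willing to pay for $\{i,j\}$ exactly when $\Delta c_p(i)=f_{ii}(\alpha-1)-f_{ij}\leq 0$. With $f_{ii}=1$ and $f_{ij}\in\{-1,0\}$ this reads $\alpha\le 1+f_{ij}$, i.e.\ $\alpha\le 1$ when $f_{ij}=0$ and $\alpha\le 0$ when $f_{ij}=-1$; since $\alpha\ge 0$ the latter essentially never gives an incentive. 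Hence flipping an entry from $0$ to $-1$ can only \emph{remove} incentives to connect, never add one.

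First I would fix an equilibrium $s^F$ for $F$ and construct $s^{F'}$ by deleting exactly those edges $\{i,j\}$ for which the only endpoints willing to pay under $F$ were ones whose relevant entry got flipped to $-1$ in $F'$ (and keeping every other edge). The key claim is that the resulting profile $s^{F'}$ is an equilibrium for $F'$ whose edge set is a subset of that of $s^F$. To see it is an equilibrium: no retained edge can be dropped profitably (the endpoint paying for it still has $\Delta c_p\le 0$ under $F'$, since flipping other entries to $-1$ does not affect that endpoint's incentive for this particular link in the $R=1$ linear model — the decision per link is independent of the other links), and no absent edge can be added profitably (if it was absent in $s^F$ because neither endpoint wanted it under $F$, flipping zeros to $-1$ only weakens incentives; if it was just deleted, by construction its would-be payer now faces $f'_{ij}=-1$ and $\Delta c_p'(i)=(\alpha-1)+1=\alpha\ge 0$, so no incentive unless $\alpha=0$, which I would handle as a trivial boundary case or exclude).

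Then the cost comparison is immediate from the same counting argument as in Lemma~\ref{lemma:numofones}: if $s^{F'}$ has $k'\le k$ edges and $s^F$ has $k$ edges, then $\textsf{Cost}(s^F)=k(\alpha-2)$ and $\textsf{Cost}(s^{F'})=k'(\alpha-2)$, so for $\alpha\le 2$ the function $k\mapsto k(\alpha-2)$ is non-increasing in $k$ and $k'\le k$ gives $\textsf{Cost}(s^{F'})\ge\textsf{Cost}(s^F)$, as required; and for $\alpha>2$ the set of isolated nodes is the unique equilibrium for both $F$ and $F'$, so $\textsf{Cost}(s^{F'})=\textsf{Cost}(s^F)=0$ and the inequality holds trivially.

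The step I expect to require the most care is the verification that the constructed $s^{F'}$ is genuinely an equilibrium under $F'$ — specifically, checking that deleting edges does not create a new profitable deviation for some third player, and that the per-link independence of incentives (which we inherit from $R=1$ and $g(x)=x$) really does localize the analysis. I would also want to be explicit that ``equilibrium strategy'' here refers to a strategy profile, and that several profiles may realize the same topology; it suffices to exhibit one. Finally, one should note the mild asymmetry with Lemma~\ref{lemma:numofones}: there, flipped entries could only \emph{add} edges and lower cost; here they can only \emph{remove} edges and raise cost, so the inequality direction flips, consistent with the statement $\textsf{Cost}(s^F)\le\textsf{Cost}(s^{F'})$.
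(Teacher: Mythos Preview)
Your proposal is correct and follows essentially the same approach as the paper: mirror the argument of Lemma~\ref{lemma:numofones}, use the per-link independence of incentives in the linear $R=1$ game to show that flipping $0$-entries to $-1$ can only delete edges, and then invoke the $k(\alpha-2)$ cost formula to conclude. Your write-up is in fact more careful than the paper's terse proof---the explicit $\Delta c_p$ computation, the verification that the constructed $s^{F'}$ is an equilibrium, and the $\alpha\le 2$ versus $\alpha>2$ case split are all details the paper leaves implicit.
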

\begin{proof}
First recall from the proof of Lemma~\ref{lemma:numofones} that the
social welfare increases with the total number of connections given
that $\forall i: f_{ii}=1$, and that it follows from
Lemma~\ref{lemma:eq} that an equilibrium $s^F$ always exists. Fix an
equilibrium $s^F$ to construct the equilibrium $s^{F'}$. Similarly
to the arguments used in the proof of Lemma~\ref{lemma:numofones},
if $i$ and $j$ are not connected in $s^F$, then they are
disconnected in $s^{F'}$ as well. On the other hand, if player $i$
pays for the connection to player $j$ in $s^F$, she has an incentive
to disconnect in $s^{F'}$ if $f'_{ij}=-1$ and for any non-negative
$\alpha$. Thus, $s^F$ contains a superset of the connections in
$s^{F'}$. \qed\end{proof}

\section{Conclusions and Open Questions}\label{sec:conclusion}

We understand our work as a further step in the endeavor to shed
light onto the socio-economic phenomena of today's distributed
systems which typically consist of a highly heterogeneous
population. In particular, this paper has initiated the study of
economic games with more complex forms of social relationships. We
introduced the concept of social range matrices and studied their
properties. Moreover, we have proved the intuition right (under
certain circumstances) that in our novel network creation game, the
social relationships are reflected in the network topology.

This paper reported only on a small subset of the large number of
questions opened by our model, and we believe that there remain many
exciting directions for future research. For instance,
it is interesting to study which conditions are necessary and
sufficient for counter-intuitive phenomena such as the fear factor
and the windfall of malice~\cite{bobby,bg}, or the non-monotonous
relationship between welfare and friendship in social
networks~\cite{windfall}. Another open question is the
characterization of all topologies that correspond to a Nash
equilibrium.






\end{document}